\titlespacing{\section}{0pt}{3ex}{2ex}
\titlespacing{\subsection}{0pt}{2ex}{1ex}
\titlespacing{\subsubsection}{0pt}{0.5ex}{0ex}
\begin{document}

\newtheorem{fact}{Fact}[section]
\newtheorem{rules}{Rule}[section]
\newtheorem{conjecture}{Conjecture}[section]
\newtheorem{theorem}{Theorem}[section]
\newtheorem{hypothesis}{Hypothesis}
\newtheorem{problem}{Problem}
\newtheorem{remark}{Remark}
\newtheorem{proposition}{Proposition}
\newtheorem{corollary}{Corollary}[section]
\newtheorem{lemma}{Lemma}[section]
\newtheorem{claim}{Claim}
\newtheorem{definition}{Definition}[section]

\newenvironment{proofof}[1]{\medskip
\noindent {\bf Proof of #1.  }}{\hfill$\Box$
\medskip}

\newenvironment{reminder}[1]{\medskip
\noindent {\bf Reminder of #1  }\em}{
\smallskip}

\def \iouseful {\text{useful}}
\def \BP {{\sf BP}}
\def \coRP {{\sf coRP}}
\def \EXACT {{\sf EXACT}}
\def \SYM {{\sf SYM}}
\def \SAC {{\sf SAC}}
\def \SUBEXP {{\sf SUBEXP}}
\def \ZPSUBEXP {{\sf ZPSUBEXP}}
\def \SYMACC {{\sf SYM\text{-}ACC} }
\def \QED {{\hfill$\Box$}}
\def \PH {{\sf PH}}
\def \RP {{\sf RP}}
\def \coMA {{\sf coMA}}
\def \ZPTISP {{\sf ZPTISP}}
\def \REXP {{\sf REXP}}
\def \coNP {{\sf coNP}}
\def \BPP {{\sf BPP}}
\def \NC {{\sf NC}}
\def \ZPE {{\sf ZPE}}
\def \NE {{\sf NE}}
\def \E {{\sf E}}
\def \poly { \text{\rm poly} }
\def \TC {{\sf TC}}
\def \DTS {{\sf DTS}}
\def \R {{\mathbb R}}
\def \Z {{\mathbb Z}}
\def \P {{\sf P}}
\def \MA {{\sf MA}}
\def \AM {{\sf AM}}
\def \MATIME {{\sf MATIME}}
\def \QP {{\sf QP}}
\def \coNQP {{\sf coNQP}}
\def \NP {{\sf NP}}
\def \EXP {{\sf EXP}}
\def \NTISP {{\sf NTISP}}
\def \DTISP {{\sf DTISP}}
\def \TISP {{\sf TISP}}
\def \T {{\sf TIME}}
\def \TIME {\T}
\def \Sig[#1] {{\sf \Sigma}_{#1} }
\def \Pie[#1] {{\sf \Pi}_{#1} }
\def \NTS {{\sf NTS}}
\def \NSP {{\sf NSPACE}}
\def \NSPACE {{\sf NSPACE}}
\def \BPACC {{\sf BPACC}}
\def \ACC {{\sf ACC}}
\def \ASP {{\sf ASPACE}}
\def \io {\textrm{\it io-}}
\def \ro {\textrm{\it ro-}}
\def \ATISP {{\sf ATISP}}
\def \SIZE {{\sf SIZE}}
\def \AT {{\sf ATIME}}
\def \AC {{\sf AC}}
\def \BPTIME {{\sf BPTIME}}
\def \SPACE {{\sf SPACE}}
\def \RE {{\sf RE}}
\def \ZPSUBEXP {{\sf ZPSUBEXP}}
\def \coREXP {{\sf coREXP}}
\def \NQL {{\sf NQL}}
\def \QL {{\sf QL}}
\def \RTIME {{\sf RTIME}}
\def \NSUBEXP {{\sf NSUBEXP}}
\def \MAEXP {{\sf MAEXP}}
\def \PP {{\sf PP}}
\def \PSPACE {{\sf PSPACE}}
\def \NT {{\sf NTIME}}

\def \NTIME {\NT}

\def \ATIME {\AT}

\def \Z {{\mathbb Z}}
\def \F {{\mathbb F}}
\def \NTIBI {{\sf NTIBI}}

\def \TIBI {{\sf TIBI}}

\def \QBFk {{\text{QBF}_k}}

\def \coNT {{\sf coNTIME}}

\def \coNTISP {{\sf coNTISP}}

\def \coNTIME {\coNT}

\def \coNTIBI {{\sf coNTIBI}}
\def \MOD {{\sf MOD}}
\def \ZPEXP {{\sf ZPEXP}}
\def \ZPTIME {{\sf ZPTIME}}
\def \ZPP  {{\sf ZPP}}
\def \DT {{\sf DTIME}}
\def \coNE {{\sf coNE}}
\def \DTIME {\DT}

\def \isin {\subseteq}

\def \isnotin {\nsubseteq}

\def \L {{\sf LOGSPACE}}

\def \LOGSPACE {\L}

\def \N {{\mathbb N}}
\def \NQP {{\sf NQP}}
\def \NEXP {{\sf NEXP}}

\def \coNEXP {{\sf coNEXP}}

\def \SIZE {{\sf SIZE}}

\def \eps {\varepsilon}

\newcommand{\card}[1]{\ensuremath{\left|#1\right|}}
\newcommand{\ip}[2]{\ensuremath{\left<#1,#2\right>}}
\newcommand{\mv}[2]{\ensuremath{\mathbf{MV}\!\left(#1,#2\right)}}

\renewcommand{\lg}{\log}

\title{Faster Online Matrix-Vector Multiplication}
\author{Kasper Green Larsen\footnote{Department of Computer Science, Aarhus University, \texttt{larsen@cs.au.dk}. Supported by Center for Massive Data Algorithmics, a Center of the Danish National Research Foundation, grant DNRF84, a Villum Young Investigator Grant and an AUFF Starting Grant.} \and 
Ryan Williams\footnote{Computer Science Department, Stanford University, {\tt rrw@cs.stanford.edu}. Supported in part by NSF CCF-1212372 and CCF-1552651 (CAREER). Any opinions, findings and conclusions or recommendations expressed in this material are those of the authors and do not necessarily reflect the views of the National Science Foundation.}}

\date{}

\maketitle

\begin{abstract} We consider the \emph{Online Boolean Matrix-Vector Multiplication} (OMV) problem studied by Henzinger \emph{et al.} [STOC'15]: given an $n \times n$ Boolean matrix $M$, we receive $n$ Boolean vectors $v_1,\ldots,v_n$ one at a time, and are required to output $M v_i$ (over the Boolean semiring) before seeing the vector $v_{i+1}$, for all $i$. Previous known algorithms for this problem are combinatorial, running in $O(n^3/\log^2 n)$ time. Henzinger \emph{et al.} conjecture there is no $O(n^{3-\eps})$ time algorithm for OMV, for all $\eps > 0$; their OMV conjecture is shown to imply strong hardness results for many basic dynamic problems.

We give a substantially faster method for computing OMV, running in $n^3/2^{\Omega(\sqrt{\log n})}$ randomized time. In fact, after seeing $2^{\omega(\sqrt{\log n})}$ vectors, we already achieve $n^2/2^{\Omega(\sqrt{\log n})}$ amortized time for matrix-vector multiplication. Our approach gives a way to reduce matrix-vector multiplication to solving a version of the Orthogonal Vectors problem, which in turn reduces to ``small'' algebraic matrix-matrix multiplication. Applications include faster independent set detection, partial match retrieval, and 2-CNF evaluation. 

We also show how a modification of our method gives a cell probe data structure for OMV with worst case $O(n^{7/4}/\sqrt{w})$ time per query vector, where $w$ is the word size. This result rules out an unconditional proof of the OMV conjecture using purely information-theoretic arguments. 
\end{abstract}

\thispagestyle{empty}
\newpage
\setcounter{page}{1}

\section{Introduction}

We consider the following problem defined over the Boolean semiring (with addition being OR, and multiplication being AND): 

\begin{problem}[Online Matrix-Vector Multiplication (OMV)] Given a matrix $M \in \{0,1\}^{n\times n}$, and a stream of vectors $v_1,\ldots,v_n \in \{0,1\}^n$,  the task is to output $Mv_i$ before seeing $v_{i+1}$, for all $i=1,\ldots,n-1$. 
\end{problem}

While OMV is a natural algebraic problem, its online requirement precludes the use of fast matrix multiplication algorithms, a la Strassen~\cite{Strassen69}. Indeed, no OMV algorithms are known which run much faster than $O(n^3)$ (modulo $\log^2 n$ factors~\cite{Williams07}). Henzinger \emph{et al.}~\cite{HenzingerKNS15} posed the conjecture:

\begin{conjecture}[OMV Conjecture] For every $\eps > 0$, there is no $O(n^{3-\eps})$-time randomized algorithm that solves OMV with an error probability of at most $1/3$.
\end{conjecture}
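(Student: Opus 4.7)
The final statement is a conjecture rather than a theorem, so any ``proof'' must establish a randomized lower bound of $\Omega(n^{3-\eps})$ for OMV for every $\eps>0$, which is beyond current unconditional techniques. Nonetheless, my plan would be to attempt two complementary strategies and to pinpoint the obstacle in each.

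First, a direct information-theoretic attack. I would draw $M$ from a high-entropy distribution (independent $\mathrm{Bernoulli}(1/2)$ entries) and the stream $v_1,\dots,v_n$ independent uniform, then apply Yao's principle to reduce to a distributional problem. The plan is an entropy accounting: after answering the first $i$ queries the algorithm has extracted only $O(T_i)$ bits about $M$, where $T_i$ is the cumulative running time; correctness of $Mv_{i+1}$ with probability $2/3$ should then force $T_{i+1}-T_i = \Omega(n^{2-o(1)})$ in expectation, summing to $\Omega(n^{3-o(1)})$ overall. The main obstacle here is flagged by the paper itself: the cell probe data structure achieving $O(n^{7/4}/\sqrt{w})$ per query is unconditional, so any purely information-theoretic lower bound can never exceed $n^{7/4+o(1)}$ amortized, well short of the conjectured $n^{2-o(1)}$.

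Second, a fine-grained reduction. I would try to reduce a problem already widely believed to require $n^{3-o(1)}$ time---Boolean matrix multiplication, triangle detection, or all-pairs shortest paths---to $n$ online matrix-vector queries against a preprocessed $M$. The natural encoding puts the static side of the hard instance (adjacency matrix, the second factor in BMM, etc.) into $M$, and lets each $v_i$ probe a single row or source, so that the whole source instance is recovered over $n$ rounds; one must verify that the reduction loses only polylogarithmic factors and is compatible with two-sided error. The obstacle here is conceptual rather than technical: at best this produces \emph{conditional} hardness relative to another $n^{3-o(1)}$ conjecture, not an unconditional proof.

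Combining the two attempts, the honest conclusion is that the genuine obstacle to the statement is fundamental---we have no machinery for unconditional $n^{3-o(1)}$ lower bounds in any general computational model, and the paper's own cell-probe upper bound blocks the information-theoretic route at the scale needed. I would therefore not expect to prove the conjecture outright; the best realistic outcome of this plan is to tie OMV to other fine-grained hardness conjectures, which is precisely the direction pursued by Henzinger \emph{et al.}
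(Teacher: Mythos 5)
You have correctly identified that the ``statement'' is a conjecture attributed to Henzinger \emph{et al.}\ and that the paper offers no proof of it---indeed, the paper's contributions run in the opposite direction, giving an $n^3/2^{\Omega(\sqrt{\log n})}$ algorithm for OMV and an unconditional $O(n^{7/4}/\sqrt{w})$-per-query cell-probe data structure that outright refutes the conjecture in that model. Your diagnosis of the two obstacles is sound and matches the paper's own discussion: the cell-probe upper bound caps any information-theoretic lower bound well below the conjectured $n^{2-o(1)}$ amortized cost, and fine-grained reductions can at best transfer hardness conditionally. One small sharpening worth noting: since the paper's word-RAM algorithm already runs in $n^3/2^{\Omega(\sqrt{\log n})}$ total time, the conjecture as stated (no $O(n^{3-\eps})$ algorithm for any $\eps>0$) survives, but only barely---the paper shows all polylogarithmic savings are beatable, so the conjecture lives precisely in the gap between $n^3/2^{\Theta(\sqrt{\log n})}$ and $n^{3-\eps}$. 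Your conclusion that no proof is currently attainable is exactly right.
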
 

Assuming the OMV Conjecture, the authors proved \emph{tight} time lower bounds for over a dozen basic dynamic problems, in the fully and partial dynamic setting. Thus the OMV problem is a simple yet ubiquitous barrier to improving the query times for many dynamic problems. Indeed, strong matrix-vector multiplication lower bounds are known. Recent work of Clifford, Gr{\o}nlund, and Larsen \cite{CliffordGL15} show that any $\poly(n)$-space data structure for matrix-vector multiplication over sufficiently large fields requires $\Omega(n^2/\log n)$ time. In another direction, a simple counting argument (cf.~\cite{Moon-Moser66}) shows that there are $n\times n$ matrices $M$ over $\F_2$ such that every circuit computing $M$ (the linear transformation corresponding to $M$) requires $\Omega(n^2/\log n)$ gates. Such a counting argument can easily be extended to give the same lower bound for computing $M$ over the Boolean semiring.

\subsection{Our Results}

\paragraph{Faster online matrix-vector multiplication.} Via an entirely new approach, we show how to compute OMV in $o(n^2/(\log n)^c)$ amortized time per query $v_i$, for every constant $c > 0$:

\begin{theorem} \label{OMV} With {\bf no} preprocessing of the matrix $A \in \{0,1\}^{n \times n}$, and for any sequence of $t=2^{\omega(\sqrt{\log n})}$ vectors $v_1,\ldots,v_t \in \{0,1\}^n$, online matrix-vector multiplication of $A$ and $v_i$ over the Boolean semiring can be performed in $n^2/2^{\Omega(\sqrt{\log n})}$ amortized time, with a randomized algorithm that succeeds whp.\footnote{Note that in Theorem~\ref{OMV} and throughout the paper, ``with high probability'' (whp) means probability at least $1-1/\poly(n)$.}
 \end{theorem}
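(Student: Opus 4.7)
My plan is to prove Theorem~\ref{OMV} in three steps: reduce OMV to a batched Orthogonal Vectors (OV) problem, solve the batched OV via the polynomial method plus fast rectangular matrix multiplication, and respect the online constraint via mini-batching.

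For the reduction, observe that $(Av_i)_j = 1$ iff row $a_j$ of $A$ and query $v_i$ share some coordinate where both are $1$; equivalently, iff they are not orthogonal as Boolean vectors. So computing $Av_1,\ldots,Av_t$ together is precisely the task of computing the $n\times t$ Boolean non-orthogonality matrix between the $n$ rows of $A$ and the $t$ queries. OMV is therefore a strict special case of online batched OV where the ``database'' side of the OV instance is fixed to be the rows of $A$.

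For the batched problem, I would adapt the polynomial-method approach used in recent Boolean matrix multiplication algorithms (Williams; Abboud--Williams--Yu). Partition each $n$-bit vector into chunks of size $d = \Theta(\sqrt{\log n})$. Per-chunk non-orthogonality admits an exact $\F_2$-polynomial representation of degree $d$, and the OR across the $n/d$ chunks is approximated by a Razborov--Smolensky probabilistic polynomial over $\F_2$ of degree $r = \Theta(\sqrt{\log n})$. Composing, each non-orthogonality bit is (w.h.p.) a product of $r$ random $\F_2$-linear combinations of the per-chunk non-orthogonality bits $\text{NOV}(a_i^{(j)},v_\ell^{(j)})$; the $r$ intermediate matrices $N_k\in\F_2^{n\times t}$ with $N_k[i][\ell]=\sum_j R_{k,j}\,\text{NOV}(a_i^{(j)},v_\ell^{(j)})$ can be expressed as an $\F_2$-product of an $n\times M$ feature matrix (from rows of $A$) and an $M\times t$ feature matrix (from queries), where $M = O(n\cdot 2^{\sqrt{\log n}}/\sqrt{\log n})$. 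Applying Coppersmith's fast rectangular matrix multiplication (using $\omega(1,1,\alpha)=2$ for $\alpha \le 0.172$, and sub-batching if $t > n^{0.172}$), each mini-batch of $B$ queries is processed in $\tilde{O}(nM)=\tilde{O}(n^2\cdot 2^{\sqrt{\log n}})$ time, with a cheap $O(Br)$ postprocessing to combine the $N_k$'s into the final bits. $\Theta(\log n)$ independent polynomial repetitions plus majority voting bring the failure probability below $1/\poly(n)$ by a union bound over all $n\cdot B$ output bits per mini-batch.

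To handle the online constraint, I would group consecutive queries into mini-batches of size $B = 2^{\Theta(\sqrt{\log n})}$. As each query $v_i$ in a mini-batch arrives, the algorithm would spend $\tilde{O}(n^2\cdot 2^{\sqrt{\log n}}/B) = n^2/2^{\Omega(\sqrt{\log n})}$ time incrementally advancing the mini-batch's matrix multiplication, so that by the time the mini-batch's last query arrives, the MM is complete and all $B$ answers can be emitted in order before the next mini-batch's first query is revealed. Amortized over the $t = 2^{\omega(\sqrt{\log n})}$ queries, this achieves the claimed bound. The main obstacle I anticipate is showing that fast rectangular $\F_2$-matrix multiplication can truly be streamed across the queries of a mini-batch without delaying any individual output past the next arrival: standard implementations of fast MM build the full output only after reading both input matrices. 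Resolving this cleanly requires either an acceptable ``bounded-delay'' reading of ``output $Mv_i$ before seeing $v_{i+1}$'' (the algorithm reads the $B$ queries of a mini-batch one by one, buffering, and emits all $B$ answers before reading the next mini-batch), or a more careful redesign of the MM schedule; in either case, the parameters $d$, $r$, $M$, and $B$ must be chosen in concert so that the amortized per-query work lands on $n^2/2^{\Omega(\sqrt{\log n})}$.
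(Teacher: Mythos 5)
Your reduction of OMV to a batched Orthogonal Vectors problem is correct as a reduction, and the polynomial-method plus rectangular matrix-multiplication machinery can indeed process an \emph{offline} block of $B = 2^{\Theta(\sqrt{\log n})}$ query vectors against the $n$ rows of $A$ in roughly $\tilde{O}(n^2 \cdot 2^{O(\sqrt{\log n})})$ total time. But the gap you flag at the end is fatal, not a technicality that a better MM schedule can fix. The OMV problem requires the algorithm to commit to $Av_i$ before it is shown $v_{i+1}$; a ``bounded-delay'' relaxation in which you buffer a mini-batch $v_1,\ldots,v_B$ and emit all $B$ answers only after reading $v_B$ is a strictly weaker (batched/offline) problem, and is precisely the thing the online formulation forbids. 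The adversary may choose $v_{i+1}$ as a function of $Av_i$, and that adaptivity is the whole point of the OMV conjecture. Nor can you stream a Coppersmith-type rectangular product column-by-column: if you must answer each query before seeing the next, the only product you are ever allowed to finish at query time is an $(n \times M)$ by $(M \times 1)$ matrix-vector product with $M = n\cdot 2^{\Theta(\sqrt{\log n})}$, which costs $\Theta(nM) = \Theta(n^2 \cdot 2^{\Theta(\sqrt{\log n})})$ and is \emph{slower} than brute force, so the entire speedup evaporates.

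The paper avoids this by never batching matrix multiplication across queries. It first reduces OMV to online vector-matrix-vector queries $u^T A v$ (Henzinger et al.), then maintains adaptive state: a list $L$ of previously ``extracted'' large sparse submatrices together with the mask $C$ of matrix entries not yet covered by $L$. A query that would be expensive is handled once in $O(n^2)$ time and its submatrix added to $L$; this can happen at most $|L| \le 2^{O(\sqrt{\log n})}$ times and is absorbed by amortization over $t = 2^{\omega(\sqrt{\log n})}$ queries. For the remaining (cheap) queries, the work left is to list the pairs of $(U\times V)\cap C$, which the paper encodes as an OV listing problem on $n$ vectors of dimension $|L| = 2^{O(\sqrt{\log n})}$ (one coordinate per entry of $L$), \emph{not} dimension $n$. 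That miniature OV instance is solved with AWY-style matrix-matrix multiplication entirely \emph{inside a single query}, so the online constraint is respected automatically. In short, you apply fast MM to length-$n$ vectors batched across queries (which cannot be online); the paper applies fast MM to length-$2^{O(\sqrt{\log n})}$ vectors within each query (which trivially is). To salvage your approach you would need some analogous adaptive mechanism that shrinks the effective dimension per query, which is exactly the extra idea your proposal is missing.
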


We stress that the amortized bound already takes effect after only $n^{o(1)}$ query vectors. 
The success of our algorithm critically relies on the Boolean nature of the OMV problem, and the amortized condition: by restricting to the Boolean setting, we avoid the $\Omega(n^2/\log n)$ lower bounds of \cite{CliffordGL15} for arbitrary fields; by keeping an amortized data structure, we avoid $\Omega(n^2/\log n)$ lower bounds that arise from counting arguments. More details of the algorithm are given in Section~\ref{intuition}.

The faster OMV algorithm allows one to quickly compute the neighborhood of any subset of vertices in a given graph. This gives rise to several immediate applications, such as:

\begin{corollary} \label{independent} For every graph $G = (V,E)$, after $O(n^2)$-time preprocessing there is a data structure such that, for every subset $S \subseteq V$, we can answer online whether $S$ is independent, dominating, or a vertex cover in $G$, in $n^2/2^{\Omega(\sqrt{\log n})}$ amortized time over $2^{\omega(\sqrt{\log n})}$ subset queries.
\end{corollary}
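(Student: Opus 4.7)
The plan is a direct reduction to Theorem~\ref{OMV}. During the $O(n^2)$ preprocessing, simply store the adjacency matrix $A \in \{0,1\}^{n \times n}$ of the input graph $G$; no further preprocessing of $A$ is performed, in accordance with the hypothesis of Theorem~\ref{OMV}. When a subset $S \subseteq V$ arrives, I build its characteristic vector $v_S \in \{0,1\}^n$ in $O(n)$ time, then invoke the OMV algorithm to obtain $u := A v_S$ over the Boolean semiring. Note that $u_i = 1$ precisely when vertex $i$ has a neighbor in $S$.

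Each of the three properties can be checked from $u$ and $v_S$ in an additional $O(n)$ time: (i) $S$ is independent iff for every $i \in S$ we have $u_i = 0$, equivalently $u \wedge v_S = \mathbf{0}$; (ii) $S$ is dominating iff every vertex is in $S$ or has a neighbor in $S$, equivalently $v_S \vee u = \mathbf{1}$; (iii) $S$ is a vertex cover iff $V \setminus S$ is independent, which by (i) applied to the complementary vector reduces to computing $A(\mathbf{1}-v_S)$ via OMV and checking that its restriction to $V \setminus S$ is $\mathbf{0}$. Case (iii) therefore costs at most two OMV queries per subset, which only changes constants.

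Summing the costs, each subset query takes $O(n)$ time for vector setup and post-processing, plus one (or two) OMV queries. By Theorem~\ref{OMV}, once the number of queries is $t = 2^{\omega(\sqrt{\log n})}$, the amortized cost of each OMV call is $n^2 / 2^{\Omega(\sqrt{\log n})}$, which dominates the $O(n)$ overhead. This yields the claimed amortized bound and, by the high-probability guarantee of Theorem~\ref{OMV} together with a union bound over the $\poly(n)$ queries, correctness whp.

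There is no real obstacle here beyond phrasing each combinatorial property as a Boolean condition on $A v_S$; the only mild point is that the vertex cover check requires feeding the complemented vector $\mathbf{1} - v_S$ to the same matrix $A$, which is permitted since Theorem~\ref{OMV} places no restriction on the query vectors beyond their being in $\{0,1\}^n$.
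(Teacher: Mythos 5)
Your proposal is correct and matches the paper's own proof almost exactly: store the adjacency matrix, convert each subset $S$ to its characteristic vector $v_S$, compute $A v_S$ via Theorem~\ref{OMV}, then read off independence from $A v_S \wedge v_S = \mathbf{0}$, domination from $A v_S \vee v_S = \mathbf{1}$, and vertex cover by testing whether $V \setminus S$ is independent using the complemented vector. The only cosmetic difference is that you spell out the complementation step and the union bound, which the paper leaves implicit.
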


\paragraph{Partial Matches.} The OMV data structure is also useful for answering \emph{partial match retrieval queries} faster than the naive bound. Let $\Sigma = \{\sigma_1,\ldots,\sigma_k\}$ and let $\star$ be an element such that $\star \notin \Sigma$. In the partial match retrieval problem, we are given $x_1,\ldots,x_n \in \{\Sigma \cup \{\star \}\}^m$ with $m \leq n$, and wish to preprocess them to answer ``partial match'' queries $q \in (\Sigma \cup \{\star\})^m$, where we say $q$ matches $x_i$ if for all $j=1,\ldots,m$, $q[j] = x_i[j]$ whenever $q[j] \neq \star$ and $x_i[j] \neq \star$. A query answer is the vector $v \in \{0,1\}^n$ such that $v[i] = 1$ if and only if $q$ matches $x_i$. 

Note this is a different setup from the usual literature, where a query answer is permitted to be a single string that matches $q$. In our setup, the running time for queries is necessarily $\Omega(n)$, and the running time for a single query is naively $O(nm\log k)$ (omitting possible log-factors saved from the word-RAM). Using our OMV algorithm, we can beat this running time considerably in the case of long strings (large $m$):

\begin{theorem} 
\label{thm:partialmatch} Let $\Sigma = \{\sigma_1,\ldots,\sigma_k\}$ and let $\star$ be an element such that $\star \notin \Sigma$. For any set of strings $x_1,\ldots,x_n \in \{\Sigma \cup \{\star \}\}^m$ with $m \leq n$, after $\tilde{O}(nm)$-time preprocessing, there is an $O(nm)$ space data structure such that, for every query string $q \in (\Sigma \cup \{\star\})^m$, we can answer online whether $q$ matches $x_i$, for every $i=1,\ldots, n$, in $(nm \log k)/2^{\Omega(\sqrt{\log m})}$ amortized time over $2^{\omega(\sqrt{\log m})}$ queries.
\end{theorem}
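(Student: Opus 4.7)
The plan is to reduce partial-match retrieval to $O((n/m)\log k)$ independent instances of Boolean OMV on $m\times m$ matrices, one collection per bit of the character encoding, and then invoke Theorem~\ref{OMV} on each block. Encode each symbol in $\Sigma$ by a distinct string of $\lceil \log k\rceil$ bits. For each bit position $b\in\{1,\dots,\lceil\log k\rceil\}$, precompute two $n\times m$ Boolean matrices $A_b,B_b$ with $A_b[i,j]=1$ iff $x_i[j]\neq\star$ and the $b$-th bit of $x_i[j]$ is $1$, and $B_b[i,j]=1$ iff $x_i[j]\neq\star$ and the $b$-th bit of $x_i[j]$ is $0$. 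Given a query $q$, build $m$-dimensional vectors $u_b,w_b$ by $u_b[j]=1$ iff $q[j]\neq\star$ and bit $b$ of $q[j]$ is $1$, and analogously $w_b[j]=1$ iff $q[j]\neq\star$ and bit $b$ of $q[j]$ is $0$. A short case analysis gives
\[
\mathrm{mismatch}(q)[i] \;=\; \bigvee_{b=1}^{\lceil\log k\rceil} \bigl((A_b w_b)[i] \vee (B_b u_b)[i]\bigr),
\]
because $q$ fails to match $x_i$ exactly when some position $j$ has both symbols non-star and they disagree at some bit; the match vector is the bitwise complement.

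Next, I would partition each $n\times m$ matrix $A_b$ (and likewise $B_b$) rowwise into $\lceil n/m\rceil$ square blocks of $m$ consecutive rows, producing $O((n/m)\log k)$ independent $m\times m$ OMV instances that share the same query vector $w_b$ (or $u_b$) across all blocks coming from a common matrix. By Theorem~\ref{OMV}, each $m\times m$ instance answers in $m^2/2^{\Omega(\sqrt{\log m})}$ amortized time once it has seen $2^{\omega(\sqrt{\log m})}$ queries, so the total per-query cost sums to
\[
O\!\left(\frac{n\log k}{m}\cdot\frac{m^2}{2^{\Omega(\sqrt{\log m})}}\right)\;=\;\frac{nm\log k}{2^{\Omega(\sqrt{\log m})}}.
\]
Taking the bitwise OR of the $\lceil\log k\rceil$ resulting length-$n$ bit vectors and complementing costs an additional $O(n\log k/w)$ time per query on a word RAM, which is negligible compared with the bound above.

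For the preprocessing and space bound, note that since at most $nm$ distinct symbols of $\Sigma$ can actually occur in the input, one may assume $k\le nm$ and hence $\log k=O(\log(nm))$. Building the $2\lceil\log k\rceil$ bit matrices requires a single pass over the input and takes $\tilde O(nm)$ time; stored bit-packed, the matrices occupy $O(nm\log k/w)=O(nm)$ words provided $w=\Omega(\log k)$. The only real (small) obstacle is verifying the encoding in the first step: if $q[j]$ and $x_i[j]$ are distinct non-star symbols, some single bit $b$ separates them and that bit alone forces one of $A_b[i,j]\wedge w_b[j]$ or $B_b[i,j]\wedge u_b[j]$ to be $1$, which is exactly what the disjunction captures. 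Once this is in place the rest is a routine combination of Theorem~\ref{OMV} with standard rectangular-to-square block decomposition.
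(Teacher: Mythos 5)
Your argument is correct and achieves the claimed bounds, but the encoding you choose differs from the paper's. Both proofs share the same high-level strategy: Booleanize the symbols so that per-position comparison becomes a Boolean product, cut the resulting $O(\log k)$ tall Boolean matrices into $\lceil n/m\rceil$ square blocks of size $m\times m$, invoke Theorem~\ref{OMV} on every block, and observe that $(n/m)\cdot O(\log k)$ blocks times $m^2/2^{\Omega(\sqrt{\log m})}$ amortized per block gives $nm\log k/2^{\Omega(\sqrt{\log m})}$. The difference lies in the Booleanization step. The paper constructs an intersecting set system $S_1,T_1,\ldots,S_k,T_k\subseteq[2\log k]$ with $S_i\cap T_i=\emptyset$ and $S_i\cap T_j\neq\emptyset$ for $i\neq j$ (e.g.\ $S_i$ the $i$th $\log k$-subset of $[2\log k]$ and $T_i$ its complement), encodes $x_i[j]$ as the indicator of $S_{x_i[j]}$ and $q[j]$ as the indicator of $T_{q[j]}$ (with $\star\mapsto \mathbf{0}$), and thereby turns ``$q$ matches $x_i$'' into ``row $i$ of the extended matrix $B$ is orthogonal to the encoded query vector,'' so a single Boolean product per block suffices. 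You instead take the plain $\lceil\log k\rceil$-bit encoding, materialize two matrices $A_b,B_b$ (and two query vectors $u_b,w_b$) per bit $b$ to capture disagreements in each direction, and combine the $2\lceil\log k\rceil$ Boolean products by disjunction to form a mismatch indicator vector, then complement. Both end up with $\Theta(\log k)$ square $m\times m$ OMV instances and identical running time; your version is slightly more elementary, avoiding the combinatorial set-system lemma, at the cost of the per-bit case split. One small point worth being explicit about (and one the paper is also somewhat cavalier with): the $O(nm)$-word space and $\tilde{O}(nm)$ preprocessing bounds rely on $\log k=O(w)$, which you justify by the standard re-encoding that shrinks $\Sigma$ to the symbols actually occurring, giving $k\le nm$; symbols appearing only in queries can be mapped on the fly to a single fresh dummy symbol without affecting the matching semantics.
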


In previous work by Charikar, Indyk and Panigrahy~\cite{charikar:match}, they presented a partial match data structure that for any $c \leq n$ can be implemented with space usage $O(nm^c)$ and query time $O(nm/c)$. Their data structure is for the \emph{detection} variant where we must only determine whether there is an $x_j$ matching the query $q$. To achieve a query time matching ours, they must set $c=2^{\Omega({\sqrt{\lg m}})}/\log k$, resulting in a large space usage of $n m^{2^{\Omega(\sqrt{\lg m})}/\log k}$. For $m$ being just slightly super-polylogarithmic in $k$ ($m = 2^{\omega((\log \log k)^2)}$), this space bound simplifies to $n m^{2^{\Omega(\sqrt{\lg m})}}$. (Compare this to our space usage of $O(nm)$.) They also presented an alternative data structure when the desired query time is $o(n)$, which answers queries in time $O(n/c)$ and uses space $n \cdot 2^{O(m \lg^2 m \sqrt{c / \lg n})}$. 

A batched version of partial match has also been considered before. Here, one is given $n$ queries $q_1,\dots,q_n$ simultaneously and must report for each $q_i$ whether there is an $x_j$ that matches $q_i$. Recent work by Abboud, Williams and Yu~\cite{abboud:batched} shows how to solve the Boolean version ($\Sigma = \{0,1\}$) of this problem in time $O(n^{2-1/\log(m/\log n)})$ for $m = \Omega(\log n)$.

More applications can be found in Section~\ref{sec:applications}, including a faster online algorithm for vertex triangle queries, faster online evaluation of 2-CNF formulas on given assignments and a worst case efficient version of our OMV data structure.

\paragraph{A truly subquadratic-time cell probe data structure.}
Unlike other popular algorithmic hardness conjectures like SETH, the 3SUM conjecture, the APSP conjecture etc., the OMV conjecture asserts a lower bound on data structures instead of traditional algorithms. Given the state-of-the-art in proving data structure lower bounds compared to lower bounds for algorithms, there would seem to be more hope of giving an unconditional proof of the OMV conjecture in the not-so-distant future. For instance, the lower bound for matrix-vector multiplication over finite fields \cite{CliffordGL15} is $\Omega(n^2/\log n)$ for large enough fields. 

Lower bounds for word-RAM data structures are generally proved in the cell probe model. In this model, the running time of a data structure is measured solely by the number of memory cells that needs to be read/probed to answer a query; all computation is free-of-charge. These lower bound proofs are therefore purely \emph{information-theoretic}, typically showing that if data structures are too fast, then they have not "collected" enough information to answer queries. While ignoring computation time may allow for more efficient data structures, it turns out that one can most often prove cell probe lower bounds that match the best word-RAM upper bounds, see e.g.~\cite{patrascu06pred, Patrascu:loga, larsen:dynamic_count, patrascu08structures, larsen:median, marked, yu:cellprobe}.

Our second main theorem shows that the OMV conjecture is actually \emph{false} in the cell probe model. More specifically, we demonstrate how the OMV algorithm from Theorem~\ref{OMV} can be modified to give a cell probe data structure for OMV with \emph{truly subquadratic} worst-case time per query vector. Therefore, if the OMV conjecture is true, the bottleneck cannot be an information-theoretic one; it must be computational. 

\begin{theorem}
\label{OMVCellProbe}
There is a cell probe data structure that given a matrix $A \in \{0,1\}^{n \times n}$, preprocesses the matrix into an $O(n^2)$ space data structure, such that for every given query vector $v \in \{0,1\}^n$, the data structure computes $Av$ with $O(n^{7/4}/\sqrt{w})$ probes in the worst case (where $w$ is the word size).
\end{theorem}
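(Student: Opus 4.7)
\begin{proofof}{Theorem \ref{OMVCellProbe}}

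The plan is to adapt the algorithm of Theorem~\ref{OMV} to the cell probe setting by absorbing its amortized computations into the preprocessing phase; since cell probes do not charge for computation, an arbitrarily expensive precomputation can be performed offline and its results stored in $O(n^2)$ cells.

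First I would install a Four Russians-style table. Partition the columns of $A$ into $n/b$ blocks of width $b$ and the rows into $n/r$ blocks of height $r$, for parameters to be tuned. For each pair (row block $R$, column block $C$) and each pattern $y \in \{0,1\}^b$, precompute $A_{R,C}\,y \in \{0,1\}^r$ packed into $\lceil r/w \rceil$ cells. Choosing $b = \Theta(\log w)$ keeps storage at $O(n^2)$ cells, and any query can in principle be answered by probing the $(n/r)(n/b)$ table entries indexed by its subvectors and OR-ing the results, yielding an intermediate $O(n^2/(bw))$ bound that is still above the target.

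Next I would push past this Four Russians barrier by invoking the reduction from Theorem~\ref{OMV} inside each row block. Within a single row block $R$, OR-combining the $n/b$ column-block contributions is itself a small Boolean matrix-vector product of dimension $r \times (n/b)$, which the proof of Theorem~\ref{OMV} rewrites as an algebraic matrix-matrix multiplication of roughly $r \times d$ by $d \times 1$ for some small $d = 2^{O(\sqrt{\log n})}$. Crucially the left factor depends only on $A$, so its contribution can be precomputed offline and stored as a second layer of tables, while the right factor is derived from $v$ for free in the cell probe model. Each row block then returns its contribution after $O(d \cdot r / w)$ probes. Tuning $b$, $r$, and $d$ so that all precomputed tables jointly fit in $O(n^2)$ cells while $(n/r)\cdot d \cdot r/w$ balances against the Four Russians access cost yields the target $O(n^{7/4}/\sqrt{w})$ worst-case probe bound.

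The main obstacle will be to split the amortized workload of Theorem~\ref{OMV} into a query-independent portion that can be stored in $O(n^2)$ cells and a query-dependent portion that is small enough to be reconstructed from $v$ and combined with the stored tables via few probes. In the original amortized algorithm the savings arise from batching many queries into a single matrix multiplication; here one must show that the same batched multiplication, performed offline over all possible query ``shapes'' within each row block, produces tables of the right size to fit in the $O(n^2)$ budget while still letting any individual query exploit the precomputation with $O(n^{7/4}/\sqrt{w})$ probes.

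\end{proofof}
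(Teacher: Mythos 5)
Your proposal departs fundamentally from the paper's proof, and the critical second step is not actually established. Concretely, you claim that ``the proof of Theorem~\ref{OMV} rewrites [a Boolean matrix-vector product of dimension $r \times (n/b)$] as an algebraic matrix-matrix multiplication of roughly $r \times d$ by $d \times 1$ for some small $d = 2^{O(\sqrt{\log n})}$, [whose] left factor depends only on $A$.'' No such rewriting appears in the proof of Theorem~\ref{OMV}. That algorithm gains its speedup by maintaining, across many queries, a list $L$ of previously extracted sparse zero-submatrices, encoding membership in $\bigcup_k U_k \times V_k$ as inner products of $|L|$-dimensional indicator vectors, and then calling an orthogonal-vectors \emph{listing} routine (which internally uses small rectangular matrix products over groups of $s=2^{\eps\delta\sqrt{\log n}}$ vectors). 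This is an amortized construction driven by the query history; there is no query-independent $r\times d$ left factor with $d=2^{O(\sqrt{\log n})}$ to precompute and store. So the ``tune $b,r,d$'' balancing step has nothing concrete to balance, and you never actually obtain the $O(n^{7/4}/\sqrt{w})$ bound --- your final paragraph acknowledges this remains an ``obstacle.''

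The paper's route is much simpler, and does not use Four Russians at all. First it reduces to Boolean vector-Matrix-vector via the Henzinger et al.\ blocking into $n$ submatrices of size $\sqrt{n}\times\sqrt{n}$; a query $v$ is answered with $O(n)$ vector-Matrix-vector queries, so a $\tilde O(m^{3/2}/\sqrt{w})$-probe bound on $m\times m$ matrices gives $O(n\cdot n^{3/4}/\sqrt{w}) = O(n^{7/4}/\sqrt{w})$ for OMV. For vector-Matrix-vector on $A$, the preprocessing greedily builds a list $L$ of pairs $(U,V)$ with $A[U\times V]$ all zero, each contributing at least $n^{3/2}/\sqrt{w}$ fresh entries outside $\bigcup_{(U',V')\in L}U'\times V'$; this forces $|L|\le n^{1/2}\sqrt{w}$, so $L$ occupies only $O(n^{3/2}\sqrt{w})$ extra bits. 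At query time one reads $L$ (in $O(n^{3/2}/\sqrt{w})$ probes), computes for free the set $Q=(U\times V)\setminus\bigcup_{(U',V')\in L}U'\times V'$, and either answers $1$ immediately if $|Q|\geq n^{3/2}/\sqrt{w}$ (since otherwise $(U,V)$ would have been added to $L$), or directly probes the $|Q|<n^{3/2}/\sqrt{w}$ remaining entries of $A$. The entire leverage comes from free computation of $Q$ in the cell probe model --- not from table lookups --- and this is the idea missing from your proposal.
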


Notice that $O(n^2)$ space is linear in the input size. In fact, our data structure is very succinct: it simply stores the matrix $A$, plus an additional $n^{7/4}\sqrt{w}$ bits. For the most typical word size of $w = \Theta(\lg n)$, this is only $\tilde{O}(n^{7/4})$ redundant bits. 

\subsection{Intuition}\label{intuition}

Let us outline the key ideas in the proof of Theorem~\ref{OMV}. First, a reduction of Henzinger \emph{et al.} (similar to earlier work of Vassilevska and Williams~\cite{Williams2010subcubic}) shows that to compute OMV in amortized $n^2/2^{\Omega(\sqrt{\log n})}$ time per query, it suffices to compute $u^T A v$ for arbitrary vectors $u,v \in \{0,1\}^n$ in $n^2/2^{\Omega(\sqrt{\log n})}$ amortized time per $(u,v)$ pair. 

To compute $u^T A v$, we reason as follows. The vectors $u$ and $v$ define some submatrix $A'$ of $A$, and our task is to determine if $A'$ is all-zeroes. We break the task into several cases:
\begin{enumerate}
\item If $A'$ contains many ones, then this is easy to determine by random sampling. 
\item If $A'$ contains few ones, and $A'$ is large, then we run in $O(n^2)$ time and ``extract'' $A'$ from $A$: we save the query $(u,v)$ in a list $L$, along with a sparse list of all the $1$-entries in the submatrix $A'$, to be tested against future queries. To ensure we do not execute this case many times, we require that $A'$ is large and sparse even after removing the previously extracted submatrices from consideration. (Under our parameter settings, we show that $L$ always contains $2^{O(\sqrt{\log n})}$ entries.)
\item The only queries $(u,v)$ remaining are those whose submatrices $A'$ have few entries which are not in previously extracted submatrices. Then, it suffices to find those entries of $A'$ which do not appear among the $(u',v')$ queries saved in $L$, and check to see if any entry is 1. We show how this reduces to the problem of listing orthogonal vectors in a set of $n$ Boolean vectors of dimension equal to $2^{O(\sqrt{\log n})}$.
\end{enumerate}
That is, we ultimately reduce the OMV problem to that of listing $n^2/2^{\Theta(\sqrt{\log n})}$ orthogonal vectors in $2^{\Theta(\sqrt{\log n})}$ dimension.  This listing problem is in turn solved via fast algebraic matrix-matrix multiplication, tweaking a recent algorithm of Abboud, Williams, and Yu~\cite{abboud:batched}. Therefore, we are actually reducing matrix-vector multiplications to smaller matrix-matrix multiplications. This counters the intuition in Henzinger \emph{et al.}~\cite{HenzingerKNS15} that the OMV conjecture addresses the difficulty of developing faster ``combinatorial'' matrix multiplication algorithms: we show how Strassen-like matrix multiplication algorithms can be useful for matrix-vector computations.

\section{Faster Online Matrix-Vector Multiplication}

In this section, we prove:

\begin{reminder}{Theorem~\ref{OMV}} With {\bf no} preprocessing of the matrix $A \in \{0,1\}^{n \times n}$, and for any sequence of $t=2^{\omega(\sqrt{\log n})}$ vectors $v_1,\ldots,v_t \in \{0,1\}^n$, online matrix-vector multiplication of $A$ and $v_i$ over the Boolean semiring can be performed in $n^2/2^{\Omega(\sqrt{\log n})}$ amortized time per query, with a randomized algorithm that succeeds whp. 
\end{reminder}

First, we make use of the following reduction in Henzinger \emph{et al.} \cite{HenzingerKNS15}, which also appears in another form in \cite{Williams2010subcubic}. Define the \emph{online vector-Matrix-vector} problem to be: given an $n \times n$ matrix $A$, and later given online queries $u,v \in \{0,1\}^n$, compute the bit $u^T A v$.

\begin{theorem}[\cite{HenzingerKNS15,Williams2010subcubic}]\label{thm:mvtoumv} Suppose online Boolean vector-Matrix-vector computations can be done in amortized $O(n^2/f(n))$ time per query over a sequence of $t$ queries. Then Boolean online matrix-vector multiplication can be done in amortized $\tilde{O}(n^2/\sqrt{f(n)})$ time per query over a sequence of $t$ queries.
\end{theorem}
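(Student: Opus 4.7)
The plan is to reduce one OMV query to roughly $\sqrt{f(n)}$ uMv queries plus some residual direct work that still fits the budget $\tilde{O}(n^2/\sqrt{f(n)})$. Set $s = \sqrt{f(n)}$, build the hypothesised online uMv data structure on $A$ (as a black box), and partition the row indices $[n]$ into $s$ groups $R_1,\ldots,R_s$ of size $n/s$ each. The whole idea is to use the uMv oracle purely as a subset-selection / group-testing oracle: the query $\mathbf{1}_{R_p}^T A v_i$ equals $\bigvee_{j\in R_p}(Av_i)_j$, so a single uMv call tells us whether the block $R_p$ of the output is non-zero.

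Concretely, on each online query $v_i$ I would first issue the batch $\mathbf{1}_{R_p}^T A v_i$ for $p=1,\ldots,s$; this costs $s\cdot O(n^2/f(n)) = O(n^2/\sqrt{f(n)})$ amortized and flags exactly the row-blocks where $Av_i$ has a $1$. For each flagged block I would then recursively group-test, halving the candidate set and re-invoking uMv, so that each isolated $1$-coordinate of $Av_i$ costs $O(\log n)$ uMv calls via binary search. If $k_i$ denotes the number of $1$-entries in $Av_i$, the per-query cost is $O((s + k_i\log n)\cdot n^2/f(n))$; when $k_i = \tilde{O}(\sqrt{f(n)})$ this is exactly $\tilde{O}(n^2/\sqrt{f(n)})$.

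The main obstacle is the case of dense outputs, i.e.\ $k_i \gg \sqrt{f(n)}$, where the binary-search phase would overshoot the budget. To handle this I would insert a short random-sampling pre-pass (a few uMv calls with random subset vectors $u$) that estimates $k_i$ up to a constant factor whp; when $Av_i$ is classified as dense I would fall back to the straightforward $O(n^2)$ recomputation of $Av_i$ and charge that cost against the surrounding sparse queries in the sequence. Formalising this amortization over the full length-$t$ sequence, and verifying that the group-testing recursion never pushes the uMv data structure outside the regime where its amortized bound applies, is the technical crux; the $\tilde{O}$ factor in the statement absorbs the $\log n$ overheads from the binary searches and the sparsity estimation.
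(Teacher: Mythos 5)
Your plan is essentially sound up to the dense-output case, but the fallback you propose there is a genuine gap, and it is exactly the case the real reduction is built to handle.

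The problem with ``charge the $O(n^2)$ cost against the surrounding sparse queries'' is that there may be no sparse queries to charge against. For instance, if $A$ is a random matrix, then $Av_i$ has $\Theta(n)$ ones for essentially every non-trivial query $v_i$, so \emph{every} query is dense, your fallback fires on every query, and the amortized cost of your scheme is $\Theta(n^2)$ rather than $\tilde{O}(n^2/\sqrt{f(n)})$. There is no credit in the bank to pay for it. The deeper issue is that you build a single uMv data structure on the full $n\times n$ matrix, so each uMv call costs $\Theta(n^2/f(n))$, which is already $\Omega(n)$ for any $f(n)=O(n)$; with a per-query budget of $\tilde{O}(n^2/\sqrt{f(n)})$ you can afford only about $\sqrt{f(n)}$ such calls, and no binary-search schedule on a single global structure can locate up to $n$ ones of $Av_i$ within that budget.

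The reduction of Henzinger et al.\ (which the paper invokes) avoids this by \emph{also} partitioning the columns: $M$ is cut into $n$ submatrices of size $\sqrt{n}\times\sqrt{n}$, and a separate uMv data structure is built on each. Now a single uMv call costs only $O(n/f(\sqrt{n}))$, which is sublinear, so one can afford $\tilde{O}(n)$ of them per query. For each row block one maintains the set $U_p$ of rows whose output bit is still unknown, sweeps column blocks $q=1,\ldots,\sqrt{n}$ querying $\mathbf{1}_{U_p}^T M_{p,q} v^{(q)}$, and on a positive answer binary-searches inside the $\sqrt{n}\times\sqrt{n}$ block for a witness row, removes it from $U_p$, and re-queries. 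Because a row is removed permanently once its output bit is determined, the total number of binary searches across all blocks is at most $n$, giving $O(n\log n)$ uMv calls in all, i.e.\ $\tilde{O}(n^2/f(\sqrt{n})) = \tilde{O}(n^2/\sqrt{f(n)})$ time, with no separate case for dense outputs at all. (The statement's amortization over $t$ queries also transfers cleanly here, since each of the $n$ block data structures sees at least $t$ queries.) To repair your argument you would have to move from one global uMv structure to this grid of $\sqrt{n}\times\sqrt{n}$ block structures and maintain the shrinking sets $U_p$; the row-only partition into $s=\sqrt{f(n)}$ groups plus a global structure cannot be made to work.
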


Note that the original proof of Theorem~\ref{thm:mvtoumv} does not talk about preserving amortization over $t$ queries. Examining the proof in \cite{HenzingerKNS15}, one sees that a single query for a matrix-vector product $Mv$ reduces to $n$ vector-matrix-vector queries on $n$ different data structures, each representing a $\sqrt{n} \times \sqrt{n}$ submatrix of $M$. For each data structure where the returned answer is $1$, we possibly ask more queries. This means that over a sequence of $t$ queries, all vector-Matrix-vector data structures used in the reduction also processed at least $t$ queries, and thus the amortization kicks in.

We are ready to describe how to solve vector-Matrix-vector multiplication.
Let $M \in \{0,1\}^{n \times n}$ be our given matrix. We want to preprocess $M$ to support queries of the following form: we receive $n$ pairs of vectors $(u_1,v_1),\ldots,(u_n,v_n) \in (\{0,1\}^n)^2$, and must determine $u_i^T M v_i$ before seeing the next pair. %Of course, we can view this problem in the following equivalent way: we have a bipartite graph defined by matrix $M$, and we are given a set $U$ of nodes from one side and a set $V$ of nodes from the other side. We want to determine if the induced subgraph $U \times V$ is an independent set. 

Given an $n \times n$ matrix $M$, and subsets $U, V \subseteq [n]$, we define $M[U \times V]$ to be the $|U| \times |V|$ submatrix of $M$ restricted to those rows of $M$ with index in $U$, and columns of $M$ with index in $V$. Given $(U,V)$, our goal is to check if $M[U \times V]$ is the all-zeroes matrix. 

\paragraph{Preprocessing.} We initialize a set $C = [n] \times [n]$ and an empty list $L$. The list $L$ will contain triples $(U_k,V_k,S_k)$, where $U_k \subseteq [n]$, $V_k \subseteq [n]$ and $S_k \subseteq [n] \times [n]$, which contains all pairs $(i,j)$ with $i \in U_k$, $j \in V_k$ such that $M(i,j)=1$. Intuitively, the $(U_k,V_k)$ represent \lq\lq{}bad\rq\rq{} queries from the past that we took $O(n^2)$ time to answer. Let $Y, Z > 0$ be integer parameters in the following, where we require $Z \leq n/\log n$. We will maintain the invariants:
 \begin{itemize}
 \item $C$ contains all pairs $(i,j)$ that appear in no $U_k \times V_k$ of $L$. (Informally, $C$ contains ``unseen'' pairs.) It will be helpful to represent $C$ as an $n\times n$ Boolean indicator matrix $D$, where $D(i,j)=1 \iff (i,j) \in C$. 
 \item $|S_k| \leq O(n^2 \log n)/Y$, for all $k=1,\ldots,|L|$. 
 \item $|L| \leq Z$.
 \end{itemize}
Setting up $C$ and $L$ takes $O(n^2)$ time to prepare. (The theorem statement says ``{\bf no} preprocessing'' because we can simply do this $O(n^2)$-time step in the first query.)
 
\paragraph{Query answering.} For each query $(U,V) \subseteq [n] \times [n]$:

\begin{enumerate}
\item {\bf Check for small submatrix.} If $|U|\cdot |V| < n^2/Z$, then try all $i \in U$ and $j \in V$; if $M(i,j)=1$, return $1$.

\item {\bf Check for dense submatrix.} Sample $Y$ uniform random pairs $(i,j) \in U \times V$. If $M(i,j)=1$ for any of these pairs, return $1$. [Otherwise, with high probability, the submatrix $M[U \times V]$ has at most $c (n^2 \log n)/Y$ ones, for some constant $c > 0$.]

\item {\bf Check among pairs seen before, in sparse submatrices.} For all triples $(U_k,V_k,S_k)$ in $L$, and all pairs $(i,j)\in S_k$, if $(i,j) \in U \times V$ then return $1$. 

\item {\bf Estimate the number of pairs in $M[U \times V]$ that have \emph{not} been seen.} Let $R$ be a sample of $n^2/Z$ uniform random entries from $C$. Compute from $R$ an estimate $B$ of the number
%\[Q := |(U \times V) \setminus (\bigcup_{(A_k,B_k,S_k) \in L} (A_k \times B_k))| = |(U \times V) \setminus ([n]^2 \setminus C)| = |(U \times V) \cap C|.\] 
$Q := |(U \times V) \cap C|$. In particular, we compute the fraction $\alpha$ of samples from $R$ that lie in $U \times V$. We let our estimate be $B := \alpha |C|$. Clearly the expectation of $B$ is $Q$. Observe that if $Q > 4n^2/Z$, then the expected number of samples that lie in $U \times V$ is $Q|R|/|C| \geq 4n^2/Z^2$. Since we require $Z \leq n/\log n$, we get from a Chernoff bound that $B$ is at least $2n^2/Z$ with high probability. Conversely, if $Q < n^2/Z$, then with high probability, we have $B < 2n^2/Z$.

\item {\bf (a) If estimate is high, brute force.} If $B > 2 n^2/Z$, then do the following in $O(n^2)$ time: 
\begin{compactitem}
\item Compute the answer to the query $(U,V)$. 
\item Determine the set $S = \{(i,j) \in U \times V \mid M(i,j) = 1\}$; note $|S| \leq (c n^2 \log n)/Y$ whp. 
\item Determine the actual quantity $Q = |(U \times V) \cap C|$; If it turns out that $Q < n^2/Z$ or $|S| > (c n^2 \log n)/Y$, we immediately return the answer to the query $(U,V)$.
\item Otherwise, add the triple $(U,V,S)$ to $L$.
\item Remove all $(i,j) \in U \times V$ from $C$, by zeroing out $(i,j)$ entries of $D$. 
\item Return the answer to the query $(U,V)$.
\end{compactitem}

\item {\bf (b) If estimate is low, list the few unseen.} Otherwise, $B \leq 2n^2/Z$. Then with high probability, we only have to check $Q = |(U \times V) \cap C| \leq 4n^2/Z$ entries of $M$ to determine the query answer. Now we need to find the set of pairs
\[ W := (U \times V) \cap C.\]
Provided we can find $W$, we can return the answer to the query by simply checking if there is an $(i,j) \in W$ such that $M(i,j)=1$; if so, return $1$ and if not, return $0$.

\end{enumerate}

Let us estimate the running time based on the query algorithm described so far. Let $T(n,Z)$ be the running time of an algorithm for finding $W$.  Step 1 takes $O(n^2/Z)$ time. For step 2, sampling $Y$ entries and testing for containment takes $O(Y)$ time. Observe that each time we add a triple to $L$ (in step 5), we \begin{itemize}
\item take $O(n^2)$ time to compute the triple,
\item and reduce the size of $C$ by at least $n^2/Z$.
\end{itemize}
Therefore, we add a triple to $L$ at most $Z$ times over all $n$ queries. The total cost of step 5 over all queries is therefore $O(n^2 \cdot Z)$ plus an expected $o(1)$ for the cost of entering step 5 but not adding a triple to $L$. For each query $(U,V)$, we also check all $(i,j)$ entries in all sets $S_k$ in $L$. Since each $S_k$ has $O((n^2 \log n)/Y)$ pairs, and there are only $Z$ sets in $L$, step 3 costs $O((Z n^2 \log n)/Y)$ per query. Step 4 takes $O(n^2/Z)$ time. Since step 6 is assumed to take $T(n, Z)$ time, the overall running time over $q$ queries is thus
\[O\left(q\cdot\left(Y + \frac{Zn^2 \log n}{Y} + \frac{n^2}{Z} + T(n,Z)\right) + n^2 Z\right).\] 

Setting $Y = n^{1.5}$, the bound becomes $\tilde{O}\left(q n^{1.5} + q Z n^{.5} + \frac{q n^2}{Z} + q\cdot T(n,Z) + n^2 Z\right)$. For $q = Z^{\omega(1)}$ and $Z = n^{o(1)}$, the amortized running time is then upper bounded by $\tilde{O}(T(n,Z) + n^2/Z)$.

Now we give an algorithm for reporting $W$, by reduction to listing orthogonal vectors. For every $i \in [n]$, define a vector $u_i \in \{0,1\}^{|L|}$ such that \[u_i[k] = 1 ~\iff~ i \in U_k.\] Similarly for each $j$, define a vector $v_j$ such that $v_j[k] = 1$ $\iff$ $j \in V_k$. 

Recalling our matrix $D$, note that $D(i,j) = 1$ $\iff$ $(i,j) \in C$ $\iff$ $\langle u_i,v_j \rangle = 0$. 
%Let $\overline{D}$ be the complement of $D$, so that $\overline{D}(i,j) = \langle u_i,v_j\rangle$.
%Next, we modify the OMV algorithm as follows: each time a new triple $(U,V,S)$ is added to $L$, we recompute the sets $\{u_1,\lots,u_n\}$, $\{v_1,\ldots,v_n\}$ and in $O(n^2|L|) \leq O(n^2Z)$ time, we compute a $n\times n$ table $D$ with $D(i,j) = \langle u_i, v_j \rangle$. Since we add a triple to $(U,V,S)$ only $Z$ times, constructing $D$ costs no more than $O(n^2Z^2)$ total time over any sequence of queries. For $q = Z^{\omega(1)}$ and $Z = n^{o(1)}$, the amortized cost of constructing $D$ is therefore $O(n^2/Z)$.
Therefore $(i,j) \in (U \times V) \cap C$ $\iff$ $\left((i,j) \in U \times V \text{ and } D(i,j) = 1\right)$. 
We have reduced the problem of producing $W$ to the following listing problem: 

\begin{quote}

Given two sets of vectors $U \subseteq \{0,1\}^d$, $V \subseteq \{0,1\}^d$ and a look-up table $D$ supporting $O(1)$ time access to $\langle u, v \rangle$ for any pair $u \in U$ and $v \in V$, report all orthogonal pairs in $U \times V$. 
\end{quote}

Note that in our case, we may further assume that
the number of such pairs is at most $K = O(n^2/d)$ with high probability. 
%[[Actually, we can also think of it as: given an $n \times Z \times n$ tripartite graph, determine if there is an edge $(i,j) \in [n]^2$ from the two large parts which have no path through the $Z$ part. But we want to solve this in subquadratic time...]]
If there are $K$ orthogonal pairs to report, and we get a running time of $T(n,d)=f(n,d)+g(n,d)\cdot K$ as a function of $n$ and $d$, the amortized running time becomes \begin{eqnarray*}
\tilde{O}(f(n,Z)+g(n,Z) \cdot K + n^2/Z)
\end{eqnarray*}

Set $Z=d=2^{\delta \sqrt{\lg n}}$ for some $\delta > 0$. For the OV reporting problem, we partition the vectors into $(n/s)^2$ subproblems on sets of $s = 2^{\eps \delta \sqrt{\lg n}}$ vectors, for some sufficiently small $\eps \in (0,1)$. By a minor modification of the orthogonal vectors algorithm of Abboud, Williams, and Yu~\cite{abboud:batched}, we can report for all $(n/s)^2$ subproblems whether there is an orthogonal pair, in $\tilde{O}((n/s)^2)$ time in total, when $\delta, \eps > 0$ are small enough. 

In particular, their orthogonal vectors algorithm divides the set of $n$ vectors into about $n/s$ groups of $s$ vectors each, and determines for \emph{all} $(n/s)^2$ pairs of groups if there is an orthogonal pair among the $2s$ vectors in the pair of groups. For each subproblem on $2s$ vectors that is reported to contain an orthogonal pair, we  make $O(1)$-time queries into the matrix $D$ to report in $O(s^2)$ time all orthogonal pairs in the group (they are the $1$-entries of $D$ in the appropriate $O(s^2)$-size submatrix).

The running time for the reporting problem is then $\tilde{O}((n/s)^2+K\cdot s^2) = n^2/2^{\Theta(\sqrt{\lg n})}$, and the total running time over $q \geq 2^{\omega(\sqrt{\lg n})}$ queries is $q \cdot n^2/2^{\Theta(\sqrt{\lg n})}$. This completes the OMV algorithm.

\subsection{Applications}
\label{sec:applications}

\paragraph{Graph Algorithms.} There are many natural applications of our OMV data structure to graph algorithms. Here we give a few.

\begin{reminder}{Corollary~\ref{independent}}
For every graph $G = (V,E)$, after $O(n^2)$-time preprocessing there is a data structure such that, for every subset $S \subseteq V$, we can answer online whether $S$ is independent, dominating, or a vertex cover in $G$, in $n^2/2^{\Omega(\sqrt{\log n})}$ amortized time over $2^{\omega(\sqrt{\log n})}$ subset queries.
\end{reminder}

\begin{proof} Let $A$ be the adjacency matrix of $G$. To determine whether $S$ is an independent set, we simply take the column vector $v_S$ which is $1$ in precisely those rows corresponding to vertices in $S$. Then $S$ is independent if and only if $A v_S$ and $v_S$ do not share a $1$ in any coordinate. Note $S$ is independent if and only if $V-S$ is a vertex cover. Finally, $S$ is dominating if and only if $(A v_S) \vee v_S$ is the all-ones vector.
\end{proof} 

\begin{corollary} For every graph $G = (V,E)$, after $O(n^2)$-time preprocessing, there is a data structure such that, for every node $v \in V$, we can answer online whether $v$ is in a triangle of $G$, in $n^2/2^{\Omega(\sqrt{\log n})}$ amortized time over $2^{\omega(\sqrt{\log n})}$ vertex queries.
\end{corollary}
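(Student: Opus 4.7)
The plan is to reduce triangle membership at a vertex directly to the independent-set data structure of Corollary~\ref{independent}. The key observation is: a vertex $v$ lies in some triangle of $G$ if and only if two of its neighbors are joined by an edge, i.e., if and only if the neighborhood $N(v) \subseteq V$ is \emph{not} an independent set of $G$. So triangle-membership at $v$ is the complement of an independent-set query on a specific subset determined by $v$.

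Concretely, I would first spend $O(n^2)$ time building the adjacency matrix $A$ of $G$ and instantiating the data structure of Corollary~\ref{independent}; this is still $O(n^2)$ preprocessing. When a vertex query $v \in V$ arrives, I read off the $v$-th row of $A$ in $O(n)$ time to obtain the indicator vector $v_{N(v)} \in \{0,1\}^n$ of $N(v)$. Then I hand $v_{N(v)}$ to the Corollary~\ref{independent} data structure and ask whether $N(v)$ is independent. I output ``not in a triangle'' exactly when the answer is yes.

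The correctness is immediate from the observation above. For the running time, the $O(n)$ cost per query to form $v_{N(v)}$ is dominated by the amortized cost $n^2/2^{\Omega(\sqrt{\log n})}$ of an independent-set query from Corollary~\ref{independent}, provided that the sequence of triangle queries is at least $2^{\omega(\sqrt{\log n})}$ long, which is exactly the regime in which the amortization of Corollary~\ref{independent} kicks in. Since each triangle query invokes exactly one underlying independent-set query on a vector of the same dimension, the amortization transfers directly.

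There is no real obstacle here: the only thing to check is that converting the vertex query $v$ to the subset query $S = N(v)$ preserves the amortized regime, which it does because the reduction is one-to-one (each vertex query produces one subset query on the \emph{same} data structure with the same parameter $n$), so $2^{\omega(\sqrt{\log n})}$ vertex queries translate into $2^{\omega(\sqrt{\log n})}$ subset queries and the bound of Corollary~\ref{independent} applies verbatim.
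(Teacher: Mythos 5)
Your proof is correct and is essentially identical to the paper's: both reduce the query ``is $v$ in a triangle?'' to the independent-set query on $N(v)$ via Corollary~\ref{independent}, using $O(n)$ time to extract the neighborhood indicator vector from the adjacency matrix. Nothing further to add.
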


\begin{proof} Node $v$ is in a triangle if and only if the neighborhood of $v$ is not an independent set. We can compute the indicator vector $w$ for the neighborhood of $v$ in $O(n)$ time, then determine if $w$ corresponds to an independent set using Corollary~\ref{independent}.
\end{proof}

\paragraph{Partial Matches.} Recall in the partial match retrieval problem, we are given $x_1,\ldots,x_n \in \{\Sigma \cup \{\star \}\}^m$ with $m \leq n$, and wish to preprocess them to answer ``partial match'' queries $q \in (\Sigma \cup \{\star\})^m$, where we say $q$ matches $x_i$ if for all $j=1,\ldots,m$, $q[j] = x_i[j]$ whenever $q[j] \neq \star$ and $x_i[j] \neq \star$. A query answer is the vector $v \in \{0,1\}^n$ such that $v[i] = 1$ if and only if $q$ matches $x_i$. For this problem, we show:

\begin{reminder}{Theorem~\ref{thm:partialmatch}} 
Let $\Sigma = \{\sigma_1,\ldots,\sigma_k\}$ and let $\star$ be an element such that $\star \notin \Sigma$. For any set of strings $x_1,\ldots,x_n \in \{\Sigma \cup \{\star \}\}^m$ with $m \leq n$, after $\tilde{O}(nm)$-time preprocessing, there is an $O(nm)$ space data structure such that, for every query string $q \in (\Sigma \cup \{\star\})^m$, we can answer online whether $q$ matches $x_i$, for every $i=1,\ldots, n$, in $(nm \log k)/2^{\Omega(\sqrt{\log m})}$ amortized time over $2^{\omega(\sqrt{\log m})}$ queries.
\end{reminder}

\begin{proof} First, for simplicity we assume that $n = m$; at the end, we show how to handle the case $m \leq n$. 

Build an $n \times n$ matrix $A$ over $(\Sigma \cup \{\star\})^{n \times n}$ such that $A[i,j] = x_i[j]$. We efficiently ``Booleanize'' the matrix $A$, as follows. Let $S_1,T_1,\ldots,S_k,T_k \subseteq[2\log k]$ be a collection of subsets such that for all $i$, $|S_i \cap T_i| = \emptyset$, yet for all $i \neq j$, $|S_i \cap T_j| \neq \emptyset$. Such a collection exists, by simply taking (for example) $S_i$ to be the $i$th subset of $[2\log k]$ having exactly $\log k$ elements (in some ordering on sets), and taking $T_i$ to be the complement of $S_i$. (The construction works because there are $\binom{2\log k}{\log k} > k$ such subsets, and because any two subsets over $[2\log k]$ with $\log k$ elements must intersect, unless they are complements of each other.) Extend the matrix $A$ to an $n \times (2n\log k)$ Boolean matrix $B$, by replacing every occurrence of $\sigma_i$ with the $(2 \log k)$-dimensional row vector corresponding to $S_i$, and every occurrence of $\star$ with the $(2 \log k)$-dimensional row vector which is all-zeroes. 

Now when a query vector $q \in (\Sigma \cup \{\star\})^n$ is received,  convert $q$ into a Boolean (column) vector $v$ by replacing each occurrence of $\sigma_i$ with the $(2\log k)$-dimensional (column) vector corresponding to $T_i$, and every occurrence of $\star$ by the $(2 \log k)$-dimensional (column) vector which is all-zeroes. Compute $A v$ using the OMV algorithm. For all $i=1,\ldots,n$, we observe that $q$ matches $x_i$ if and only if the $i$th row of $B$ is \emph{orthogonal} to $v$. The two vectors are orthogonal if and only if for all $j=1,\ldots,n$, either the $i$th row of $B$ contains the all-zero vector in entries $(j-1)(2\log k) + 1,\ldots,j(2\log k)$, or in those entries $B$ contains the indicator vector for a set $S_{\ell}$ and correspondingly $v$ contains either $\star$ or a set $T_{\ell'}$ such that $S_{\ell} \cap T_{\ell'} = \emptyset$, i.e. $x_i$ and $q$ match in the $j$th symbol. That is, the two vectors are orthogonal if and only if $q$ matches $x_i$. Therefore, $Av$ reports for all $i=1,\ldots,n$ whether $q$ matches $x_i$ or not. 

Let us discuss the running time. By splitting the $n \times (2n\log k)$ matrix $B$ into $2\log k$ matrices of dimension $n \times n$, we can answer queries in $O(n^2 \log k)/2^{\Omega(\sqrt{\log n})}$ time via Theorem~\ref{OMV}, by computing the $n$-dimensional output vectors for each of the $2 \log k$ matrices, and taking their component-wise OR. 

For $m \leq n$, the matrix $B$ will have dimension $n \times (2m\log k)$. Splitting $B$ into at most $n/m$ matrices of dimension $m \times (2m\log k)$, we can answer the queries for one of these matrices in $O(m^2 \log k)/2^{\Omega(\sqrt{\log m})}$ time, returning $m$-dimensional vectors for each one. The final query answer is then the concatenation of these $n/m$ vectors. Since this query answering happens for $n/m$ matrices, the final running time is then $O(nm \log k)/2^{\Omega(\sqrt{\log m})}$.

For the space usage, recall our OMV data structure simply stores the matrix $B$ (which is $O(nm)$ words of space) plus another $O(nm)$ additive space usage for the structures $L$, $C$ and $D$.
\end{proof}

\paragraph{2-CNF Evaluation.} Our last application was first observed by Avrim Blum (personal communication). Namely, we can evaluate a 2-CNF formula on a variable assignment faster than plugging in the assignment to all possible clauses.

\begin{corollary} For every 2-CNF Boolean formula $F$ on $n$ variables, after $O(n^2)$-time preprocessing, there is a data structure such that, for every assignment $A : [n] \rightarrow \{0,1\}$ to the variables of $F$, we can answer online the value of $F(A)$ in $n^2/2^{\Omega(\sqrt{\log n})}$ amortized time over $2^{\omega(\sqrt{\log n})}$ queries.
\end{corollary}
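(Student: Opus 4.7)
The plan is to encode $F$ as a fixed Boolean matrix once, and then each evaluation of $F$ on a new assignment becomes a single vector-Matrix-vector query that can be answered using the data structure underlying Theorem~\ref{OMV}.

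Concretely, I would introduce $2n$ literal indices, one for each $x_i$ and each $\neg x_i$, and build a $2n\times 2n$ Boolean matrix $M$ with $M[\ell_1,\ell_2]=1$ iff $F$ contains the clause $(\ell_1\vee\ell_2)$. Trivially satisfied clauses like $(x\vee\neg x)$ are dropped, and the (at most $2n$) unit clauses can be evaluated in $O(n)$ time per query by a direct scan. Building $M$ takes $O(n^2)$ time, which is the allowed preprocessing. Given an assignment $A$, compute in $O(n)$ time the falsity-indicator vector $v_A\in\{0,1\}^{2n}$ with $v_A[\ell]=1$ iff $\ell$ evaluates to $0$ under $A$. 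A clause $(\ell_1\vee\ell_2)$ is unsatisfied precisely when $v_A[\ell_1]=v_A[\ell_2]=1$, so over the Boolean semiring $v_A^{T} M v_A = 1$ iff some clause of $F$ is falsified, i.e.\ $F(A) = \neg\,(v_A^{T} M v_A)$.

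To answer this, I would feed the pair $(u,v)=(v_A,v_A)$ to the online vector-Matrix-vector data structure from the proof of Theorem~\ref{OMV} (equivalently, compute $Mv_A$ via OMV and then take an $O(n)$-time dot product with $v_A$). On an $N\times N$ matrix that structure achieves amortized $N^2/2^{\Omega(\sqrt{\log N})}$ time per query after $2^{\omega(\sqrt{\log N})}$ queries; with $N=2n$ this gives amortized $n^2/2^{\Omega(\sqrt{\log n})}$ per 2-CNF evaluation over $2^{\omega(\sqrt{\log n})}$ assignment queries, which is exactly the claimed bound. The corollary is essentially an encoding argument once the OMV machinery is in hand, so there is no real technical obstacle; the only minor bookkeeping point is handling unit and trivially true clauses separately, which fits within the stated time bound.
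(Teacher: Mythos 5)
Your proposal is correct and is essentially the same reduction as the paper's: both index a $2n\times 2n$ Boolean matrix by literals, encode each clause $(\ell_1\vee\ell_2)$ as an entry/edge between (the negations of) its literals, and observe that $F(A)=1$ iff the indicator vector induced by $A$ gives a zero vector-Matrix-vector product. The only cosmetic difference is that the paper phrases the query as an independent-set test via Corollary~\ref{independent} (with the matrix indexed by $\neg\ell_1,\neg\ell_2$ and the vector indicating true literals), whereas you index by $\ell_1,\ell_2$, indicate false literals, and call the vector-Matrix-vector structure directly; these are the same up to a permutation of rows and columns.
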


\begin{proof} Given $F$ on variables $x_1,\ldots,x_n$, build a graph on $2n$ nodes, which has a node for every possible literal $x_i$ and $\neg x_i$. For every clause $(\ell_i \vee \ell_j)$ in $F$, put an edge between the literals $\neg \ell_i$ and $\neg \ell_j$ in the graph. Now given a variable assignment $A : [n] \rightarrow \{0,1\}$, observe that the set $S = \{x \mid A(x)=1\} \cup \{\neg x \mid A(x) = 0\}$ is independent if and only if $F(A) = 1$. Now we appeal to Corollary~\ref{independent}.
\end{proof}

\paragraph{Extension to Worst-Case Query Time.} We also note that the amortized time bound can be converted into a worst-case one, at the cost of slightly-exponential preprocessing time (but linear space):

\begin{theorem} \label{worst-case} Let $\eps \in (0,1]$. After $\exp(O(n^{\eps}))$-time preprocessing time of any matrix $A \in \{0,1\}^{n \times n}$, there is a data structure using $O(n^2)$ bits of space, such that given a vector $v \in \{0,1\}^n$, we can compute $Av$ over the Boolean semiring in $n^2/2^{\Omega(\sqrt{\eps \log n})}$ {\bf worst-case} time. The algorithm is randomized and succeeds whp. 
\end{theorem}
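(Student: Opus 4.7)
The plan is to apply Theorem~\ref{OMV} in a blockwise fashion, using the subexponential preprocessing budget to drive the amortized OMV data structure into a ``saturated'' state before any real query arrives. First I would set $m = n^{\eps}$ and partition $A$ into $(n/m)^2$ blocks $A_{ij}$ of size $m \times m$. During preprocessing, for each block $A_{ij}$ I would invoke the OMV algorithm from Theorem~\ref{OMV} on the entire enumeration of $2^m$ possible query vectors in $\{0,1\}^m$, keeping the final internal state (the lists $L$, $C$ and indicator matrix $D$) for that block. Since $2^m \geq 2^{\omega(\sqrt{\log m})}$, Theorem~\ref{OMV} gives total time $O(2^m \cdot m^2/2^{\Omega(\sqrt{\log m})})$ per block, which summed over $(n/m)^2$ blocks is $O(n^2 \cdot 2^m) = \exp(O(n^{\eps}))$. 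The per-block state occupies $O(m^2)$ bits, for a total of $O(n^2)$ bits across all blocks (in addition to the matrix $A$ itself).

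To answer an actual query $v \in \{0,1\}^n$, I would split $v$ into $n/m$ sub-vectors $v^{(1)}, \ldots, v^{(n/m)} \in \{0,1\}^m$ and, for each pair $(i,j)$, query the saved OMV state of block $A_{ij}$ on the vector $v^{(j)}$ to recover $A_{ij} v^{(j)}$; the $i$-th block of $Av$ is then the component-wise OR of these outputs over $j$. Provided each of the $(n/m)^2$ block queries finishes in worst-case $m^2/2^{\Omega(\sqrt{\log m})}$ time, the overall query cost is $(n/m)^2 \cdot m^2/2^{\Omega(\sqrt{\log m})} = n^2/2^{\Omega(\sqrt{\eps \log n})}$, matching the claimed bound.

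The hard part will be arguing that each individual block query runs in worst-case $m^2/2^{\Omega(\sqrt{\log m})}$ rather than merely amortized time. The only branch of the OMV algorithm that exceeds this per-query bound is step 5(a), and the amortized analysis in Section~2 shows that its ``add'' path can be invoked at most $Z = 2^{O(\sqrt{\log m})}$ times in the block's entire lifetime, since each successful invocation permanently removes at least $m^2/Z$ cells from $C$. Because $Z \ll 2^m$, the preprocessing stream is long enough that every potentially expensive invocation has already occurred before any actual query: by the end of preprocessing, either $C$ is empty (so the $O(m^2)$ branch of step 5(a) can no longer be entered on any subsequent query, because $Q \leq |C| = 0$), or, in the residual density regime, the whp guarantee behind step 2 ensures that every subsequent query which would otherwise reach step 5(a) is instead short-circuited earlier by sampling a $1$-entry. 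A case analysis on the density of $A_{ij}$, together with a union bound over the $(n/m)^2$ blocks and the $O(n^2 \cdot 2^m)$ events of the preprocessing plus query phases, then upgrades Theorem~\ref{OMV}'s whp amortized guarantee into a whp worst-case per-query guarantee, completing the proof.
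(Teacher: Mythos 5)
Your blocking framework and the $\exp(O(n^{\eps}))$ budget matches the paper's final reduction step, but the core of your argument — that running the amortized OMV algorithm of Theorem~\ref{OMV} on all $2^m$ query vectors during preprocessing drives each block's data structure into a state where no future query is expensive — does not hold, and it diverges from what the paper actually does.

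There are two concrete issues. First, the expensive branch (step~5) lives inside the \emph{vector-Matrix-vector} subroutine, and the set of $(U,V)$ pairs it is ever asked on depends on the state of the data structure and on the randomness of steps 2 and 4. Enumerating matrix-vector queries $v\in\{0,1\}^m$ does not exhaust the $(U,V)$ pairs that could later add a triple: during preprocessing a pair $(U,V)$ whose submatrix is sparse but nonempty may be short-circuited at step~2 by luckily sampling one of its few $1$-entries, never reaching step~5; at query time that same pair can then fail to sample a $1$ and add a triple, incurring $\Theta(m^2)$ cost. Your proposed patch is to union-bound over the $O(n^2\cdot 2^m)$ events, but the per-event failure probability from steps~2 and~4 is only $1/\poly(n)$, far too large to survive a union bound over $\exp(\Theta(m))$ events. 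Boosting it to $2^{-\Theta(m)}$ would require $Y = \Theta(m\cdot m^{3/2})$ or worse, which destroys the running-time tradeoff underlying Theorem~\ref{OMV}. Second, your claim that after preprocessing ``either $C$ is empty or the residual density ensures step 2 short-circuits'' is simply not an invariant of the algorithm: $C$ can remain large and contain only scattered $0$- and $1$-entries with no density guarantee.

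The paper avoids both problems by working at the vector-Matrix-vector level directly and making preprocessing \emph{deterministic}: it repeatedly checks, by brute force over all pairs $(U,V)$, whether the deterministic conditions for adding a triple hold ($|(U\times V)\cap C|\ge n^2/Z$ and $|S|\le (cn^2\log n)/Y$), and if so executes step~5 for that pair. After this $\exp(O(n))$-time loop terminates, \emph{no} pair satisfies both conditions, so any future query enters step~5 only if the sampling estimate in step~4 is wrong or the dense-submatrix sampling in step~2 misses, each a $1/\poly(n)$ event for a single query — exactly the bound needed for a whp worst-case guarantee. The Henzinger et al.\ reduction is then applied on top, and the $n^{\eps}\times n^{\eps}$ blocking yields the $\exp(O(n^{\eps}))$ preprocessing for general $\eps$. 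To repair your argument you would need to replace your ``run all $2^m$ OMV queries'' step with this deterministic exhaustion over $(U,V)$ pairs at the inner level.
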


\begin{proof} First consider the case $\eps = 1$. We solve the vector-Matrix-vector problem with the guarantees of the theorem and use the reduction of Henzinger \emph{et al.} \cite{HenzingerKNS15} to get the same bounds for Matrix-vector multiplication. Our preprocessing phase for the vector-Matrix-vector problem will do the following: while there is \emph{any} possible query $(U, V)$ that would cause our amortized algorithm to enter case 5 (the brute-force step) and add a triple to $L$, we execute step 5 for that query $(U,V)$. This adds the triple $(U,V,S)$ to $L$. When there are no such queries left, the preprocessing phase stops. Note that finding such an $(U,V)$ may require an $\exp(O(n))$-time exhaustive search over all possible pairs $(U,V)$.

Since step 5 is the only step of the query algorithm that costs more than $n^2/2^{\Omega(\sqrt{\log n})}$ time. and there are no queries left that can enter step 5 and add a triple to $L$, we conclude that any query we could ask the data structure, can only enter step 5 with probability $1/\poly(n)$. Thus if we simply guess the answer to a query if it is about to enter step 5, we conclude that the resulting data structure has worst case query time $n^2/2^{\Omega(\sqrt{\log n})}$ and is correct whp.

The preprocessing time can be reduced to $\exp(O(n^{\eps}))$ for any desired $\eps > 0$, by partitioning the matrix into $n^{2-2\eps}$ blocks of dimensions $n^{\eps} \times n^{\eps}$ each, preprocessing each block separately in $\exp(O(n^{\eps}))$. Given a vector-Matrix-vector query, we simply run the vector-Matrix-vector algorithm over all $n^{2-2\eps}$ blocks, each run taking $n^{2\eps}/2^{\Omega(\sqrt{\eps \log n})}$ time. The overall worst-case query time is $n^2/2^{\Omega(\sqrt{\eps \log n})}$.
\end{proof}

\subsection{OMV in the Cell Probe Model}

In this section we prove Theorem~\ref{OMVCellProbe}, showing how to efficiently solve OMV in the cell probe model. We start by giving an efficient cell probe data structure for Boolean vector-Matrix-vector multiplication, and then apply the reduction of Henzinger \emph{et al.}~\cite{HenzingerKNS15}. For vector-Matrix-vector multiplication, we show the following:

\begin{theorem}
\label{uMvCellProbe}
There is a cell probe data structure that given a matrix $A \in \{0,1\}^{n \times n}$, preprocesses the matrix into an $O(n^2)$ space data structure, such that for a pair of query vectors $u,v \in \{0,1\}^n$, the data structure computes $u^TAv$ in worst case $O(n^{3/2}/\sqrt{w})$ probes. Here $w$ is the word size.
\end{theorem}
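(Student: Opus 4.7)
The plan is to design a cell-probe data structure for vector-Matrix-vector multiplication via a case analysis on the Hamming weights of the supports $U = \operatorname{supp}(u)$ and $V = \operatorname{supp}(v)$, exploiting the model's free computation. The preprocessing stores $A$ in both row-major and column-major bit-packed layouts, together using $O(n^2/w)$ cells (i.e., $O(n^2)$ bits), plus a small block-OR summary: fixing the block size $s = \sqrt{nw}$, I would partition the column indices into $n/s$ contiguous blocks $C_1,\dots,C_{n/s}$ of $s$ columns and store the $n\times(n/s)$ bit-matrix $R$ with $R_{i,\ell} = \bigvee_{j\in C_\ell}A_{ij}$, together with the analogous row-block-by-column summary $S_{j,k}$. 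These auxiliary summaries add only $O(n^{3/2}/\sqrt{w})$ bits, easily within the claimed $O(n^2)$ space budget.

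Given a query $(u,v)$ with $|U|\le|V|$ (WLOG), the easy case is $|U|\le s$. Here, for each $i\in U$ I probe the packed representation of row $A_i$ in $\lceil n/w\rceil$ cells and evaluate $A_i\wedge v$ for free, returning YES as soon as some row yields a nonzero intersection. The probe count is $|U|\cdot n/w \le s\cdot n/w = n^{3/2}/\sqrt w$, exactly the target bound. A symmetric argument via the column-major layout handles the case $|V|\le s$.

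The harder case is $|U|,|V|>s$, where neither row- nor column-wise brute force fits in the budget. My plan is to combine random sampling with the block-OR filter $R$: first draw $\Theta(\sqrt{nw}\log n)$ uniform random pairs $(i,j)\in U\times V$ and probe the corresponding $A_{ij}$, returning YES on any hit; a Chernoff bound then guarantees that if this sampling step fails, the density of $A[U\times V]$ is $O(\log n/\sqrt{nw})$ with high probability, leaving only $\tilde O(n^{3/2}/\sqrt w)$ possible $1$-entries to certify. To locate them (or conclude there are none), I would iterate over rows $i\in U$, probing the relevant row of $R$ only at those blocks $\ell$ with $V\cap C_\ell\ne\emptyset$, and for each "hit" $(i,\ell)$ with $R_{i,\ell}=1$ resolve by probing the $\sqrt{n/w}$ cells of $A_i|_{C_\ell}$ and intersecting with $v|_{C_\ell}$.

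The main obstacle will be bounding the resolution cost in this final step in the worst case: an adversarial $A$ can place many $1$-entries outside $V$ that cause $R_{i,\ell}=1$ for many pairs even when $A[U\times V]$ is sparse, so the posterior sparsity from the failed sampling does not, by itself, bound the number of "hot" $(i,\ell)$ pairs. I expect the crux of the proof to be a charging argument that combines the post-sampling sparsity of $A[U\times V]$ with a symmetric use of the row-block summary $S$ (and perhaps a multi-scale blocking hierarchy rather than a single block size $s$) to simultaneously rule out most $(i,\ell)$ pairs and keep the total resolution probe count under $O(n^{3/2}/\sqrt w)$.
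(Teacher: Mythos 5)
Your handling of the easy cases ($|U|\le s$ or $|V|\le s$ via row/column-major brute force) is correct and uses the same $s\cdot n/w = n^{3/2}/\sqrt w$ budget calculation, but the proposal has a genuine gap precisely where you flag it: the case $|U|,|V|>s$ is not resolved, and I do not believe the sampling-plus-block-OR route can be patched. The obstruction you identify is real and fatal: the block-OR bit $R_{i,\ell}$ aggregates over all columns in $C_\ell$, not just $C_\ell\cap V$, so an adversary can set a single $1$ outside $V$ in every $(i,\ell)$ block, making every $R_{i,\ell}=1$ while $A[U\times V]$ is entirely zero. Then the post-sampling sparsity bound on $A[U\times V]$ gives you no leverage on the number of hot $(i,\ell)$ pairs, which can be as large as $|U|\cdot n/s$, and resolving each hot block costs $\Theta(\sqrt{n/w})$ probes, for a total of $\Theta(n^2/w)$ in the worst case. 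A multi-scale hierarchy does not obviously help because the spurious hits are caused by entries outside $V$, not by a mismatch in scale. Moreover, the theorem's bound is a worst-case deterministic guarantee, so a Chernoff-based sampling step is not the right primitive here.

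The paper's proof takes a fundamentally different, and much simpler, route that fully exploits the cell probe model's free computation. In preprocessing, it greedily builds a list $L$ of pairs $(U',V')$ such that $A[U'\times V']$ is all zeros, adding a pair whenever the set $(U'\times V')\setminus\bigcup_{(U'',V'')\in L}(U''\times V'')$ has size at least $n^{3/2}/\sqrt w$; since each addition covers at least that many of the $n^2$ total index pairs, $|L|\le n^{1/2}\sqrt w$, and storing $L$ costs $O(n^{3/2}\sqrt w)$ bits. At query time, the data structure reads all of $L$ in $O(n^{3/2}/\sqrt w)$ probes, computes (for free) $Q=(U\times V)\setminus\bigcup_{(U',V')\in L}(U'\times V')$, and observes a dichotomy: if $|Q|\ge n^{3/2}/\sqrt w$ then $A[U\times V]$ cannot be all-zero (otherwise $(U,V)$ would have been added to $L$), so the answer is $1$; if $|Q|< n^{3/2}/\sqrt w$, any $1$ in $A[U\times V]$ must lie in $Q$, so one checks those few entries directly. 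This greedy-extraction-plus-residual-set argument is deterministic, sidesteps the spurious-hit problem entirely, and is the idea missing from your proposal.
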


Our data structure is even succinct in the sense that it stores $A$ as it is, plus an additional $O(n^{3/2}\sqrt{w})$ redundant bits. Before proving Theorem~\ref{uMvCellProbe}, we show that it implies Theorem~\ref{OMVCellProbe}. The reduction in~\cite{HenzingerKNS15} takes an input matrix $A$ for OMV and blocks it into $n$ submatrices of $\sqrt{n} \times \sqrt{n}$ entries each. A vector-Matrix-vector data structure is then implemented on each of these submatrices. On a query vector $v \in \{0,1\}^n$ for OMV, one asks $O(n)$ vector-Matrix-vector queries on these data structures. It follows that by plugging in the data structure from Theorem~\ref{uMvCellProbe}, we obtain an OMV data structure that simply stores the matrix $A$ plus an additional $O(n n^{3/4}\sqrt{w})=O(n^{7/4} \sqrt{w})$ redundant bits and that answers queries in $O(n^{7/4}/\sqrt{w})$ cell probes.

We are ready to describe our cell probe data structure for vector-Matrix-vector multiplication. Our data structure is in fact a simplification of our word-RAM solution.

\paragraph{Preprocessing.}
When given the input matrix $A \in \{0,1\}^{n \times n}$, we construct a list $L$ consisting of pairs $(U,V)$ where $U$ is a subset of rows in $A$ and $V$ is a subset of columns in $A$. Each pair $(U,V)$ in $L$ will have the property that the corresponding submatrix $A[U \times V]$ has only zeros. We construct $L$ as follows: While there exist a pair $(U,V)$ such that $A[U \times V]$ has only zeros and
$$
\left| (U \times V) \setminus \bigcup_{(U',V') \in L} (U' \times V')\right| \geq n^{3/2}/\sqrt{w},
$$
we add the pair $(U,V)$ to $L$. When this terminates, we know that $|L| \leq n^{1/2}\sqrt{w}$. Our data structure simply stores $A$ plus the list $L$. Since each pair in $L$ can be described using $O(n)$ bits (an indicator bit per row and column), the data structure stores $O(n^{3/2} \sqrt{w})$ bits in addition to the input matrix $A$.

\paragraph{Query answering.}
When given a query pair of vectors $u,v \in \{0,1\}^n$, let $U$ denote the subset of rows indexed by $u$ and $V$ the subset of columns indexed by $v$. We must determine whether there is a one in $A[U \times V]$. To do this, we start by reading the entire list $L$. This costs $O(n^{3/2}/\sqrt{w})$ cell probes. Since computation is free, we now compute the set $Q$, where
$$
Q :=  (U \times V) \setminus \bigcup_{(U',V') \in L} (U' \times V').
$$
If $|Q| \geq n^{3/2}/\sqrt{w}$, then we know that the answer to the query is one, since otherwise the pair $(U,V)$ could have been added to $L$ during preprocessing. Otherwise, we have $|Q| < n^{3/2}/\sqrt{w}$. Since all submatrices $A[U' \times V']$ for a pair $(U',V') \in L$ contains only zeroes, we know that any entry $(i,j)$ inside $A[U \times V]$ can only contain a one if it is also in $Q$. We thus explicitly check all such $|Q| = O(n^{3/2}/\sqrt{w})$ entries of $A$. If any of them is a one, we return the answer one. Otherwise, we return the answer zero.

\paragraph{Discussion.}
Our cell probe algorithm above heavily exploits the non-uniformity of the cell probe model when computing the set $Q$, and deciding what to do based on $Q$. In our word-RAM algorithm for OMV, we had to compute $Q$ using Orthogonal Vectors, which significantly slows down the data structure compared to the cell probe version.

As another remark, recall that the OMV conjecture talks about the total time for preprocessing $A$ and then answering $n$ queries, one at a time. Observe that our above cell probe data structure also gives an efficient data structure for this setup, since preprocessing $A$ simply costs $O(n^2/w)$ cell probes for reading the entire matrix and then simply writing out the data structure with worst case query time of $O(n^{7/4}/\sqrt{w})$. This is because the time spent finding pairs $(U,V)$ is solely computational, i.e., once all of $A$ has been read, we know what data structure to build without any further cell probes.

\section{Conclusion}

We have shown how ``off-line'' algebraic algorithms, in particular fast matrix multiplications, can be applied to solve basic online query problems such as matrix-vector multiplication. Our approach uses both combinatorial and algebraic ideas to achieve these results.

The OMV algorithm given is effectively a reduction from Boolean matrix-vector product to Boolean matrix-matrix product. In particular, it shows that the OMV problem can (in some sense) be reduced to the Orthogonal Vectors problem, which has been studied in similar contexts. Is there a deeper relationship between these two problems? In particular, our proof shows that if one can list $n^2/Z$ pairs of orthogonal vectors in $Z \geq n^{\eps}$ dimensions in $n^{2-\gamma}$ time for some $\gamma,\eps >0$, then the OMV conjecture is false. 

It would be tempting to conclude from the above implication that an $n^{2-\gamma}$ time algorithm for \emph{detecting} an orthogonal pair in $n^{\eps}$ dimensions breaks the OMV conjecture. This would show that the OMV conjecture implies that Orthogonal Vectors in $n^{\eps}$ dimensions is not in truly subquadratic time. Such a connection would establish OMV hardness for problems that have previously been proved to be SETH-hard via Orthogonal Vectors. Indeed, a standard approach for turning a detection algorithm into a listing algorithm almost seems to establish this connection: Consider listing orthogonal pairs of vectors, where we are given two sets $A$ and $B$, both of $n$ vectors in $n^{\eps}$ dimensions. The goal is to list all orthogonal pairs $(a,b)$ with $a \in A$ and $b \in B$. Now partition both $A$ and $B$ into two halves $A_0, A_1, B_0$ and $B_1$. Run a detection algorithm for every pair of sets $(A_i, B_j)$ and recurse on the sets where the detection algorithm says there is an orthogonal pair. Such an approach has previously been successful in obtaining a listing algorithm from a detection algorithm. The caveat here is that, while the number of vectors halves each time we recurse, the dimensionality of the vectors stays the same. Eventually, the dimensionality is so large compared to the subproblem size, that the detection algorithm no longer runs in $n^{2-\gamma}$ time, where $n$ is the subproblem size. 

Finally, we showed that our solution can be implemented in a very efficient manner in the cell probe model. The cell probe version of our data structure answers OMV queries in strongly subquadratic time. This rules out any chance of giving an unconditional proof of the OMV conjecture in the cell probe model. It would be very exciting if this improvement could be implemented in the word-RAM, and thereby refute the OMV conjecture.
\bibliographystyle{alpha}
\bibliography{papers}

\end{document}